%% file: main.tex
\documentclass{ifacconf}

\usepackage{graphicx} 
\usepackage{natbib} 

\usepackage{amsmath, amsfonts}
\usepackage{algorithmic}
\usepackage{algorithm}
\usepackage{array}
\usepackage{caption}
\usepackage{subcaption}
\usepackage{textcomp}
\usepackage{stfloats}
\usepackage{url}
\usepackage{verbatim}
\usepackage{bm}
\usepackage{subfiles}
\usepackage{comment}
\usepackage{enumitem}

\newtheorem{assumption}{Assumption}

\newtheorem{lemma}{Lemma}
\newtheorem{remark}{Remark}
\newtheorem{theorem}{Theorem}
\begin{document}
\begin{frontmatter}
    \title{\mbox{Passivity-based Semi-autonomous Rotational}\\ 
    \mbox{Motion Navigation for Rigid-body \nolinebreak Networks:}\\
    Stability and Human Passivity Analysis\thanksref{footnoteinfo}} 

    \thanks[footnoteinfo]{This work has been supported by JSPS KAKENHI Grant Number 24K00906. Approval of all ethical and experimental procedures and
protocols was granted by the Human Subjects Research
Ethics Review Committee in Institute of Science Tokyo
under Application No. 2025151, and performed in line
with the Helsinki on Ethical Principles for Medical Research 
and Ethical Guidelines for Medical and Biological
Research Involving Human Subjects.}

    \author[First]{Reiji Terunuma}
    \author[First]{Yuta Nakamura}
    \author[First]{Takeshi Hatanaka}

    \address[First]{Institute of Science Tokyo (e-mail: \{terunuma,~nakamura.y\}@hfg.sc.e.titech.ac.jp,
        hatanaka@sc.e.titech.ac.jp).}

    \subfile{text/abstract}

    \begin{keyword}
        Semi-autonomous system, Human modeling, Passivity, Network systems
    \end{keyword}
\end{frontmatter}

\subfile{text/introduction}
\subfile{text/scenario}
\subfile{text/passivity-based_control}
\subfile{text/simulation_design}
\subfile{text/conclusion}


\bibliography{ifacconf} 

\end{document}

%% file: text/abstract.tex
\begin{abstract}                
    This paper presents a novel passivity-based semi-autonomous attitude control framework,
    with a particular focus on attitude kinematics defined on the special orthogonal group $SO(3)$.
    While human-robot interaction facilitates the successful execution of complex tasks, 
    ensuring stability of human-in-the-loop systems on the $SO(3)$ manifold 
    remains a largely unsolved challenge.
    We first propose a new control architecture in which a multi-robot system preserves invariance of the average information fed back to the human operator through so-called stealthy control, and the human intervention is mediated through a virtual leader, which is coupled with the robots via a passivity-based attitude synchronization law.
    We then rigorously prove closed-loop stability of the proposed human-in-the-loop system under the assumption that the human behaves as a passive system.
    To support this analysis, simulation studies 
    are conducted to identify the human operator as a dynamical system, and to examine passivity properties of the identified model.
\end{abstract}

%% file: text/introduction.tex
\section{Introduction}
\label{chp:introduction}
Cooperative control of multi-robot systems has reached a mature stage, leading to the development of numerous control strategies that can accomplish increasingly complex real-world tasks,
such as multi-robot patrolling (\cite{Marino:2012}),
cooperative aerial transportation (\cite{Sun:2025}),
and coordinated multi-drone image sampling (\cite{Hanif:2025}).
Despite their effectiveness in static and well-structured environments, these systems often encounter challenges in dynamic and/or unstructured scenarios.
In such situations, human operators play a crucial role, as their flexible reasoning, adaptability,  and situational awareness enable the successful execution of complex tasks under uncertainty (\cite{Daihya:2023}).
Various system architectures for semi-autonomous navigation of multiple robots have been reported 
in the literature (\cite{Lee:2005}, \cite{Franchi:2012b} and \cite{Rodriguez:2010}). 
These systems are typically designed according to a fundamental principle that high-level navigation is delegated to a human operator, while low-level motion coordination is handled by autonomous control. 
In such human-in-the-loop systems, conflicts between manual control and autonomous control can arise, potentially compromising the closed-loop stability.
\cite{Antonelli:2008,Music:2017} proposed a so-called null-space-based behavioral control that decouples conflicts among robotic subtasks.
Building on this approach, \cite{Terunuma:2025} presented stealthy coverage control for rigid-body networks, which decouples 
manual navigation from autonomous coverage control. 
Specifically, the autonomous control is constrained to the null space of the average state of the bodies, which is fed back to the human operator.
While this approach, which decouples autonomous control from stability of human-in-the-loop systems, is expected to drastically simplify the stability analysis, ensuring closed-loop stability for full three-dimensional rigid-body motion remains a critical challenge.

Passivity-based approaches have been widely employed to guarantee closed-loop stability for human-in-the-loop robotic systems.
In the context of bilateral teleoperation, \cite{Anderson:1989} ensured stability by assuming the human passivity. 
However, validating this passivity assumption is nontrivial, as human operators do not always satisfy passivity (\cite{Hatanaka:2017}).
To address this issue, control frameworks that account for the possible human passivity shortage have been proposed 
(\cite{Hatanaka:2024}).
Nevertheless, these studies are primarily limited to translational motion or two-dimensional attitude kinematics (\cite{kanazawa:2023}). 
As a result, a framework that ensures stability under human intervention while robots perform full three-dimensional autonomous collaborative maneuvers, including rotational motion, has not yet been established.

In this paper, we focus on semi-autonomous attitude control of robots interacting with a human operator, as a preliminary step toward addressing full three-dimensional rigid-body motion.
To this end, we consider attitude kinematics on the special orthogonal group $SO(3)$
following the foumulation in (\cite{Hatanaka:2015}).
A novel human-in-the-loop control architecture is then designed based on the concept of stealthy control and the inherent passivity of the rigid-body rotational motion. 
We then rigorously prove closed-loop stability under the assumption of human passivity.
Furthermore, human-in-the-loop simulations are conducted to identify the human operator dynamics from operation data and to examine the validity of the human passivity assumption.


The contributions of this paper are summarized as follows.
\begin{enumerate}[label=\roman*)]
    \item A novel passivity-based semi-autonomous attitude control framework for 
    human-in-the-loop rigid-body systems is proposed.
    \item Closed-loop stability of the proposed human-in-the-loop control architecture is rigorously established.
    \item  Human passivity in attitude control is investigated for the first time through human-in-the-loop simulations.
\end{enumerate}


%% file: text/scenario.tex
\section{Scenario and Objective}
\label{chp:model}
\begin{figure}
    \centering
    \includegraphics[width=0.845\linewidth]{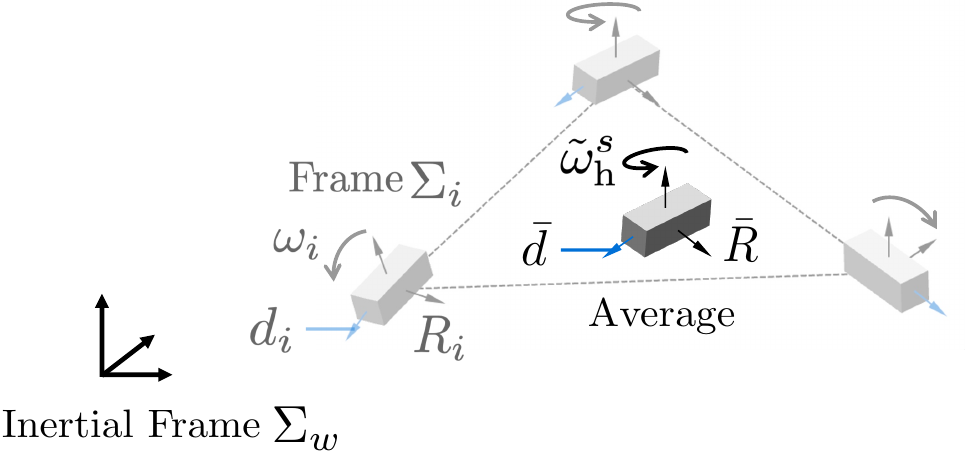}
    \caption{Average and rigid bodies in 3D space.
    In this paper, we focus exclusively on the rotational motion.}
    \label{fig:rigid_body_general}
\end{figure}

We consider a system of $n$ rigid bodies, 
as illustrated in Fig.~\ref{fig:rigid_body_general}.
Let $\Sigma_w$ be the inertial coordinate frame and $\Sigma_i$ be 
the body-fixed frames of each body $i \in \mathcal{N}=\{1,2, \ldots, n\}$.
The orientation of $\Sigma_i$ relative to $\Sigma_w$ is represented by the rotation matrix $R_i \in SO(3)$.
The rotational motion of body $i$ is then known to be formulated as
\begin{align}
    \dot{R}_{i} & =R_{i} \hat{\omega}_{i},
    \label{eq:each_robot_dynamics}
\end{align}
where $\omega_i\in\mathbb{R}^3$ is the angular body velocity and is regarded as the control input.
The symbol $\wedge$ (wedge) is the operator from $\mathbb{R}^3$ to the space of $3\times 3$ skew-symmetric matrices.
The symbol $\vee$ (vee) is the inverse operator to the wedge.

We next define a fixed vector in $\Sigma_i$.
Without loss of generality, the vector is identified with the $z$-axis of the body frame $\Sigma_i$. The vector is then represented in $\Sigma_w$ by $d_i=R_i\mathrm{e}_3\in\mathbb{R}^3$ with $\mathrm{e}_{3} = [0\ 0\ 1]^{\top}$.
The dynamics of the collective vector $d=[d_1^\top\ d_2^\top\ \cdots\ d_n^\top]^\top\in \mathbb{R}^{3n}$ can be expressed as
\begin{align*}
    \dot{d} &= S(d)\Omega,
\end{align*}
where 
\begin{align*}
    S(d) & = \begin{bmatrix}
        -R_1\hat{\mathrm{e}}_3 & O_{3} & \cdots & O_{3}\\
        O_{3} & -R_2\hat{\mathrm{e}}_3 & \cdots & O_{3}\\
        \vdots & \vdots & \ddots & \vdots\\
        O_{3} & O_{3} & \cdots & -R_n\hat{\mathrm{e}}_3
    \end{bmatrix}\in\mathbb{R}^{3n\times 3n},\\
    \Omega & = \begin{bmatrix}\omega_{1}^{\top}&\omega_{2}^{\top}&\ldots&\omega_{n}^{\top}\end{bmatrix}^{\top}\in\mathbb{R}^{3n},
\end{align*}
and $O_3$ is the $3$-by-$3$ zero matrix.
We also define the average of the vectors $d_1, d_2,\dots, d_n$ as
\begin{align*}
\bar d = \dfrac{\sum_{i\in\mathcal{N}} d_i}{\|\sum_{i\in\mathcal{N}} d_i\|}.
\end{align*}

In this paper, we suppose that the rigid bodies and a human operator cooperatively determine the velocity input $\omega_i$.
To this end, the input $\omega_i$ is defined as $\omega_i = \omega_{\mathrm{a}i} + \omega_{\mathrm{h}i}$, where 
$\omega_{\mathrm{a}i}$ denotes the autonomous control for body $i$ and $\omega_{\mathrm{h}i}$ is a signal reflecting the human command. 
The operator is also assumed to primarily manipulate the average vector $\bar d$. 
To this end, the operator generates a spacial velocity command $\omega^s_\mathrm{h} \in \mathbb{R}^3$ by using a command interface.
The interface modifies the raw command $\omega^s_\mathrm{h}$ to $\tilde \omega^s_\mathrm{h}$, and 
the signal $\omega_{\mathrm{h}i}$ is then determined as
\begin{align}
    \omega_{\mathrm{h}i} = R_i^\top \tilde \omega^s_\mathrm{h}.
\end{align}

Let us now design the stealthy control presented in (\cite{Terunuma:2025}).
Define
the collective velocity inputs $\Omega\in\mathbb{R}^{3n}$ as: 
\begin{align*}
    \Omega & = \begin{bmatrix}\omega_{1}^{\top}&\omega_{2}^{\top}&\ldots&\omega_{n}^{\top}\end{bmatrix}^{\top} = \Omega_{\mathrm{h}} + \Omega_\mathrm{a},\\
    \Omega_{\mathrm{h}} & = \begin{bmatrix}\omega_{\mathrm{h}1}^{\top}&\omega_{\mathrm{h}2}^{\top}&\cdots&\omega_{\mathrm{h}n}^{\top}\end{bmatrix}^{\top},\\
    \Omega_{\mathrm{a}} & = \begin{bmatrix}\omega_{\mathrm{a}1}^{\top}&\omega_{\mathrm{a}2}^{\top}&\cdots&\omega_{\mathrm{a}n}^{\top}\end{bmatrix}^{\top}.
\end{align*}
The stealthy control is then designed as
\begin{align*}
    \Omega_\mathrm{a} = A(d)\tilde \Omega_\mathrm{a},
\end{align*}
where $\tilde \Omega_\mathrm{a}$ is an arbitrarily designed control signal depending on the mission.
The matrix 
$A(d)\in\mathbb{R}^{3n\times 3n}$ is defined as
\begin{align*}
    A(d) &= I_{3n} - W(d)(W^\top(d) W(d))^{-1}W^\top(d),
\end{align*}
where 
\begin{align*}
    W(d)&=\left(\dfrac{\partial \bar d}{\partial d}S(d)\right)^\top\in\mathbb{R}^{3n\times3},
\end{align*}
and $I_{3n}$ is the identity matrix of size $3n$.
Then, the dynamics of the average vector $\bar d$ can be expressed as
\begin{align}
    \dot{\bar{d}} &= \dfrac{\partial \bar d}{\partial d}\dot{d} = \dfrac{\partial \bar d}{\partial d}S(d)(\Omega_{\mathrm{h}} + A(d)\tilde\Omega_\mathrm{a}) = \dfrac{\partial \bar d}{\partial d}S(d)\Omega_{\mathrm{h}}.
    \label{eqn:stealth_control}
\end{align}
This implies that the autonomous control does not affect the average vector $\bar d$, meaning that the operator cannot recognize the influence induced by the automatic control, as long as the operator is handling only a quantity associated with $\bar d$. 


The control goal addressed in this paper is to design a semi-autonomous system 
that drives $\bar d$ toward a reference vector $d_r$ desirable for the operator, namely
\begin{align}
    \label{eq:control_objective}
    \lim_{t \to \infty} \| \bar d - d_r\| = 0.
\end{align}

\begin{remark}
Suppose that the visual coverage control is implemented as $\tilde \Omega_\mathrm{a}$ (\cite{Terunuma:2025}), where the vector $d_i$ is identified with the optical axis of an onboard camera. Under this setting, coordinated data sampling is achieved by prioritizing coverage of the area specified by the human operator through the vector $\bar d$.
Meanwhile, by applying an attitude synchronization law (\cite{Hatanaka:2015}) and identifying $d_i$ as the body heading direction, the group can be steered toward directions that are desirable to the human operator.
\end{remark}



%% file: text/passivity-based_control.tex
\section{Passivity-based Semi-Autonomous Attitude Control Architecture}
\label{sec:passivity-based_control}
\begin{figure}[t]
    \centering
    \includegraphics[width=0.98\linewidth]{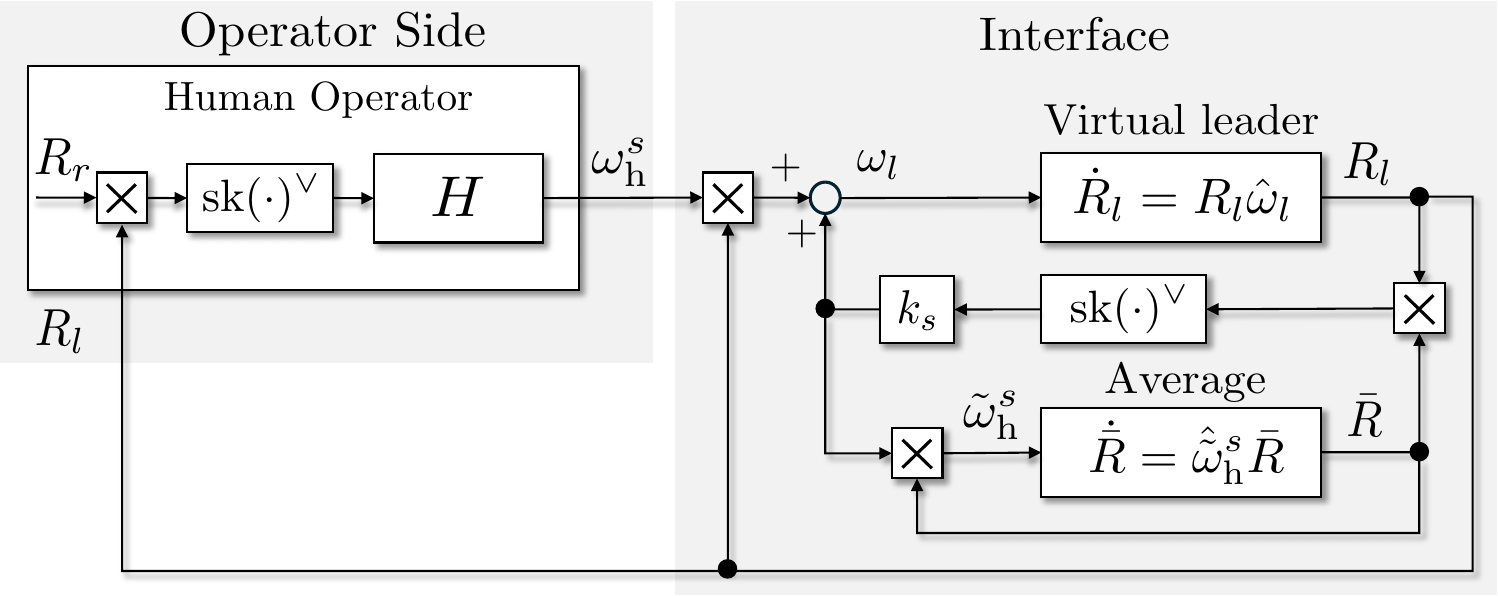}
    \caption{Block diagram of the proposed passivity-based semi-autonomous attitude control architecture.
    The block $H$ represents a mapping from the attitude deviation to the human command $\omega_\mathrm{h}^s$.}
    \label{fig:block_diagram}
\end{figure}
This section proposes a passivity-based semi-autonomous attitude control architecture that ensures stability of the 
human-in-the-loop system under human passivity.

For a technical reason, we begin by introducing quasi-average rotation matrix
$\bar R \in SO(3)$. 
Let the initial $z$-axis $\bar R\mathrm{e}_3$ be aligned with $\bar d$ at the initial time, and
the matrix $\bar R$ be the solution to the differential equation
\begin{align}
    \dot{\bar{R}} = \hat{\tilde{\omega}}^s_\mathrm{h} \bar R.
    \label{eq:actual_body_dynamics_world}
\end{align}
Then, because of (\ref{eqn:stealth_control}), the $z$-axis $\bar R\mathrm{e}_3$ remains to be the same as $\bar d$. The remaining degree of freedom corresponding to the rotation around the $z$-axis may differ from the actual average of the rotation matrices $R_i\ (i \in \mathcal{N})$. However, it does not matter in the present control scenario since the goal is to align the vector $\bar d$ only.
We also define a quasi-reference rotation matrix $R_r \in SO(3)$ such that its $z$-axis is the same as the human reference $d_r$ and the rotation around the $z$-axis is arbitrarily chosen.


We next introduce a virtual leader 
whose attitude dynamics 
are given by
\begin{align}                    
    \dot{R}_{l} & =R_{l}\hat{\omega}_{l}, 
    \label{eq:rigid_body_dynamics_world}
\end{align}
where $R_l \in SO(3)$ and $\omega_l$ denote the attitude and angular body velocity of the virtual leader. We also define $d_l = R_l\mathrm{e}_3$.
The introduction of the leader is inspired by (\cite{Hatanaka:2024}), and this will play an important role in proving closed-loop stability.

The human operator is assumed to perceive the vector $d_l$, and commands the angular velocity $\omega^s_\mathrm{h}$
so that $d_l$ synchronizes the reference $d_r$.
In order to fill the gap between $d_l$ and $\bar d$ in (\ref{eq:control_objective}), 
we design an attitude synchronization law as
\begin{subequations}
\begin{align}
    \tilde \omega_{\mathrm{h}}^s   &= k_s \bar R\ \text{sk}(\bar R^\top R_{l})^{\vee}, \label{eqn:tilde_omega_h}\\
    \omega_l &= k_s\,\text{sk}(R_{l}^\top \bar R)^{\vee}+R_l^\top{\omega}^s_\mathrm{h},
\end{align}
\label{eq:rigid_body_dynamics_control_world}
\end{subequations}
where $\text{sk}(A)$ is the skew-symmetric part of the square matrix $A$, i.e., $\text{sk}(A) = (A - A^\top)/2$.
The gain $k_s>0$ is associated with the coupling strength between the average and virtual leader.
Note that the rotational motion (\ref{eq:actual_body_dynamics_world}) with (\ref{eqn:tilde_omega_h}) is known to be formulated as
\begin{align}
    \dot{\bar{R}} = k_s \bar R\,\text{sk}(\bar R^\top R_{l}).
    \label{eqn:hatanaka_edit}
\end{align}
We see from this and (\cite{Hatanaka:2015}) that,
if $\omega_\mathrm{h}^s\equiv 0_3$, the control law (\ref{eq:rigid_body_dynamics_control_world}) achieves attitude synchronization between $\bar R$ and $R_l$.
The proposed control architecture is illustrated in Fig.~\ref{fig:block_diagram}.

Let us now introduce the following assumption, where $R_{rl}=R_r^\top R_{l}$
denotes the relative rotation and
$\omega_\mathrm{h}^b=R_l^\top\omega_\mathrm{h}^s$ is the body velocity command.
\begin{assumption}
    \mbox{}
    \begin{itemize}
        \item $R_{r}$ is constant.
        \item $R_{rl} + R_{rl}^{\top}$ is positive definite at the initial time and the human operator manipulates the virtual leader to ensure this condition at all subsequent times.
        \item The operator identically sends $\omega_{\mathrm{h}}^s=0_3$ only if $d_{l}=d_r$ identically holds.
        \item If the input to the human $\operatorname{sk}(R_{l}^{\top}(t) R_{r})^{\vee}$ is bounded, the output $\omega_{\mathrm{h}}^s$ and the internal state of the human are bounded.
        \item The human operator is time invariant and passive from $\operatorname{sk}(R_{l}^{\top}(t) R_{r})^{\vee}$ to $\omega_{\mathrm{h}}^b$, i.e. $\exists \beta \geq 0$ such that
            \begin{equation}
                \int_{0}^{\tau} \{\operatorname{sk}(R_{l}^{\top}(t) R_{r})^{\vee}\}^{\top}
                \omega_{\mathrm{h}}^b(t) \, dt \geq -\beta,\quad \forall \tau \geq 0.
                \label{eq:passivity_human}
            \end{equation}
    \end{itemize}
    \label{assum:passivity_human}
\end{assumption}
We will examine the validity of the fifth item of Assumption \ref{assum:passivity_human} in the next section.

\begin{lemma}
\label{lem:1}
    Consider the semi-autonomous control system consisting of (\ref{eq:actual_body_dynamics_world}), (\ref{eq:rigid_body_dynamics_control_world}) and
    a human operator satisfying Assumption \ref{assum:passivity_human}.
    Then, the matrix $\bar R_{r} + \bar R_{r}^{\top}$ with $\bar R_{r}=R_r^\top \bar R$ remains positive semi-definite,
    provided that the matrix is positive semi-definite at the initial time.
\end{lemma}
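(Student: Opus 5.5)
The plan is to translate positive semi-definiteness of $\bar R_r+\bar R_r^\top$ into a scalar condition on the trace, and then show by a barrier-type argument that this scalar can never cross its critical level. For any $Q\in SO(3)$ with rotation angle $\theta$, the symmetric matrix $Q+Q^\top$ has eigenvalues $2,\,2\cos\theta,\,2\cos\theta$, while $\operatorname{tr}Q=1+2\cos\theta$. Hence $\bar R_r+\bar R_r^\top\succeq 0$ if and only if $\operatorname{tr}\bar R_r\ge 1$. The same fact applied to $R_{rl}$ turns the second item of Assumption~\ref{assum:passivity_human} into $\operatorname{tr}R_{rl}>1$, i.e.\ $\cos\varphi>0$ for the angle $\varphi$ of $R_{rl}$, at all times. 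Only the first two items of Assumption~\ref{assum:passivity_human} will be used: $R_r$ constant and $R_{rl}+R_{rl}^\top\succ 0$.

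\textbf{Step 1: derivative of the trace.} Since $R_r$ is constant, (\ref{eqn:hatanaka_edit}) gives $\dot{\bar R}_r=k_s\bar R_r\,\operatorname{sk}(\bar R^\top R_l)$. Using $\bar R^\top R_l=\bar R_r^\top R_{rl}$, I would compute
\begin{align*}
\frac{d}{dt}\operatorname{tr}\bar R_r=\frac{k_s}{2}\left(\operatorname{tr}R_{rl}-\operatorname{tr}\bigl(\bar R_r R_{rl}^\top \bar R_r\bigr)\right).
\end{align*}
This step is routine algebra.

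\textbf{Step 2: sign on the boundary.} I would evaluate this expression on the boundary $\operatorname{tr}\bar R_r=1$, where $\bar R_r$ is a rotation by $\pi/2$ about some unit axis $u$. There $\bar R_r^2$ is the rotation by $\pi$ about $u$, i.e.\ $\bar R_r^2=2uu^\top-I_3$. Writing $Q=R_{rl}^\top$,
\begin{align*}
\operatorname{tr}\bigl(\bar R_r Q\bar R_r\bigr)=\operatorname{tr}\bigl(Q\bar R_r^2\bigr)=2u^\top Q u-\operatorname{tr}Q .
\end{align*}
The derivative is therefore $k_s\bigl(\operatorname{tr}Q-u^\top Q u\bigr)$. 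Now $u^\top Q u=u^\top\tfrac{Q+Q^\top}{2}u\le 1$, because the eigenvalues of the symmetric part of $Q$ are $1,\cos\varphi,\cos\varphi$. On the other hand $\operatorname{tr}Q=1+2\cos\varphi>1$ by the assumption. Hence $\frac{d}{dt}\operatorname{tr}\bar R_r>0$ whenever $\operatorname{tr}\bar R_r=1$.

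\textbf{Step 3: invariance.} Because the derivative is strictly positive on the boundary, a standard argument shows that $\operatorname{tr}\bar R_r$ cannot cross below $1$. Suppose a first time $t^*$ at which it goes below $1$. Then $\operatorname{tr}\bar R_r(t^*)=1$, and continuity of the derivative (the solutions are $C^1$) forces it to increase just after $t^*$, a contradiction. So the set $\{\operatorname{tr}\bar R_r\ge 1\}$, equivalently $\{\bar R_r+\bar R_r^\top\succeq 0\}$, is forward invariant.

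I expect the main obstacle to be Step 2, namely controlling the indefinite term $\operatorname{tr}(\bar R_r R_{rl}^\top\bar R_r)$. The Rodrigues-type simplification that is valid exactly on the boundary is what makes it tractable. If that simplification failed, the fallback would be to invoke geodesic convexity of the ball of radius $\pi/2$ in $SO(3)$, since the flow (\ref{eqn:hatanaka_edit}) pushes $\bar R$ toward $R_l$, which lies strictly inside the ball.
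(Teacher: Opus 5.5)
Your proof is correct and follows the paper's argument essentially step by step. It uses the same reduction to $\operatorname{tr}\bar R_r\ge 1$ and the same derivative $\tfrac{k_s}{2}\operatorname{tr}(R_{rl}-R_{rl}^\top\bar R_r^2)$. It uses the same boundary identity $\bar R_r^2=2uu^\top-I_3$, and your bound $k_s(\operatorname{tr}Q-u^\top Qu)\ge 2k_s\cos\varphi$ is exactly the paper's $k_s\operatorname{tr}\{\mathrm{sym}(R_{rl})(I_3-\xi\xi^\top)\}\ge k_s\lambda_{\min}(\mathrm{sym}(R_{rl}))\operatorname{tr}(I_3-\xi\xi^\top)$. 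The only difference is at the end: the paper states just $\dot h\ge 0$ and invokes Nagumo's theorem, whereas you close with an elementary first-exit-time argument, which is legitimate because you correctly observe that the boundary derivative is strictly positive.
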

{\it Proof.}
    Utilizing Rodrigues' rotation formula, the relative rotation $\bar R_{r}$ is expanded as
    \begin{align*}
        \bar R_{r}=\exp(\xi\theta)=I_3+\hat\xi\sin\theta+\hat\xi^2(1-\cos\theta), 
    \end{align*}
    where $\xi \in \mathbb{R}^3$ is the unit rotation axis and $\theta \in \mathbb{R}$ is the rotation angle.
    For $\bar R_{r}^\top + \bar R_{r}$ to be positive semi-definite, all of its eigenvalues $\{2,2\cos\theta,2\cos\theta\}$ must be nonnegative,
    namely $\cos\theta \geq 0$.
    This condition can be equivalently expressed as
    \begin{align}
        \text{tr}(\bar R_{r})=1+2\cos\theta \geq 1.
        \label{eq:tr_Rf}
    \end{align}
    Here, we introduce a scalar function
    \begin{align*}
        h(\bar R_{r})=\text{tr}(\bar R_{r})-1,
    \end{align*}
    such that the condition (\ref{eq:tr_Rf}) is equivalent to $h(\bar R_{r}) \geq 0$.
    According to Nagumo's theorem (\cite{Nagumo:1942}), the set defined by $h(\bar R_{r}) \geq 0$ is forward invariant 
    if $\dot{h}(\bar R_{r}) \geq 0$ and $\partial h(\bar R_{r})/\partial \bar R_{r}\neq O_3$  hold on the boundary where $h(\bar R_{r}) = 0$. Using (\ref{eqn:hatanaka_edit}),
    the time derivative of $h(\bar R_{r})$ is given by
    \begin{align}
        \dot{h}(\bar R_{r})&=\text{tr}(\dot{\bar{R}}_{r})=k_s\,\text{tr}\{ R_{r}^\top\bar R\,\mathrm{sk}(\bar R^\top R_l)\}\notag\\
        &=k_s\,\text{tr}\{\bar R_{r}\,\text{sk}(\bar R_{r}^\top R_{rl})\}
        = \dfrac{1}{2}k_s\,\text{tr}(R_{rl}-R_{rl}^\top \bar R_{r}^2).
        \label{eq:dot_h}
    \end{align}
    On the boundary, the rotation angle satisfies the condition $\cos\theta = 0$. 
    Under this condition, $\bar R_{r}^2$ in (\ref{eq:dot_h}) is simplified as
    \begin{align*}
        \bar R_{r}^2&=I_3+\hat\xi\sin(2\theta)+\hat\xi^2\{1-\cos(2\theta)\}\\
        &=I_3+2\hat\xi^2
        =2\xi\xi^\top-I_3.
    \end{align*}
    Noting that $2\xi\xi^\top\!-I_3$ is a symmetric matrix, (\ref{eq:dot_h}) can be rewritten as
    \begin{align*}
        \dot{h}(\bar R_{r})&=\dfrac{1}{4}k_s\,\text{tr}(R_{rl}+R_{rl}^\top-R_{rl}\bar R_{r}^2-R_{rl}^\top \bar R_{r}^2)\\
        &=\dfrac{1}{2}k_s\,\text{tr}\{\text{sym}(R_{rl})(I_3-\bar R_{r}^2)\}\\
        &= k_s\,\text{tr}\{\text{sym}(R_{rl})(I_3-\xi\xi^\top)\}\\
        &\geq k_s\,\lambda_{\min}(\text{sym}(R_{rl})) \text{tr}(I_3 - \xi\xi^\top),
    \end{align*}
    where $\lambda_{\min}(A)$ denotes the minimum eigenvalue of a square matrix $A$, 
    and $\text{sym}(A)$ is the symmetric part of the square matrix $A$, i.e., $\text{sym}(A) = (A + A^\top)/2$.
    The inequality was derived using the property that $\text{tr}(AB) \geq \lambda_{\min}(A)\text{tr}(B)$,
    where $A\in\mathbb{R}^{n\times n}$ is a real symmetric matrix and $B\in\mathbb{R}^{n\times n}$ is a square matrix (\cite{Hatanaka:2015}).
    Under the second item of Assumption \ref{assum:passivity_human}, $\text{sym}(R_{rl})$ is positive definite.
    Given that $I_3-\xi\xi^\top$ is positive semi-definite,
    their product satisfies $\text{tr}\{\text{sym}(R_{rl})(I_3-\xi\xi^\top)\}\geq0$,
    which means that $\dot{h}(\bar R_{r}) \geq 0$ holds on the boundary where $h(\bar R_{r}) = 0$. 
    Moreover, the gradient of $h(\bar R_{r})$ satisfies $\partial h(\bar R_{r})/\partial \bar R_{r} = I_3$ globally on $SO(3)$.
    By invoking Nagumo's theorem, we can show that $R_{r}^{\top}\bar R + (R_{r}^{\top} \bar R)^{\top}$ remains positive semi-definite.
\hfill $\square$

Under Assumption \ref{assum:passivity_human}, we have the following theorem. 
\begin{theorem}
    Consider the semi-autonomous control system consisting of (\ref{eq:actual_body_dynamics_world}),(\ref{eq:rigid_body_dynamics_control_world}) and
    a human operator satisfying Assumption \ref{assum:passivity_human}.
    Then, if $\bar R_{r} + \bar R_{r}^\top$ is positive semi-definite at the initial time,
    the system trajectories achieve (\ref{eq:control_objective}).
\end{theorem}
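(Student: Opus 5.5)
The plan is a passivity/Lyapunov argument built on the standard attitude energy $\phi(R) = \frac{1}{2}\mathrm{tr}(I_3 - R)$. I would use the storage function
\begin{align*}
V = \phi(\bar R_r) + \phi(\bar R^\top R_l) + \dfrac{1}{k_s}\,\phi(R_{rl}) \quad\text{or a suitable weighting,}
\end{align*}
where the simplest candidate is
\begin{align*}
V = \tfrac{1}{2}\mathrm{tr}(I_3 - \bar R^\top R_l) + \tfrac{1}{2}\mathrm{tr}(I_3 - R_r^\top R_l).
\end{align*}
These two terms are, respectively, the leader/average coupling energy and the leader/reference energy.

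First I differentiate each term along the closed loop. The first step uses \eqref{eqn:hatanaka_edit} together with $\dot R_l = R_l\hat\omega_l$. The identity $\frac{1}{2}\frac{d}{dt}\mathrm{tr}(I-R_1^\top R_2)$ with body velocities reduces to inner products of the form $-\mathrm{sk}(\cdot)^{\vee\top}\omega$, using $\mathrm{tr}(A\hat\omega) = -\mathrm{sk}(A)^{\vee\top}\omega\cdot 2$ up to the usual factor. Substituting the coupling terms then gives
\[
-k_s\|\mathrm{sk}(\bar R^\top R_l)^\vee\|^2 - k_s\|\mathrm{sk}(R_l^\top\bar R)^\vee\|^2
\]
from the first term. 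The human term $\omega^b_\mathrm{h}$ enters the first term via $-\mathrm{sk}(\bar R^\top R_l)$ rotated appropriately. The second term yields $-\mathrm{sk}(R_l^\top R_r)^{\vee\top}$ acting on $\omega_l$. I would check the cross terms carefully. If the cross terms do not cancel exactly, I would drop the first term and use only
\[
V=\tfrac12\mathrm{tr}(I-\bar R^\top R_l)
\]
together with the human storage obtained from \eqref{eq:passivity_human}. The target is a dissipation inequality
\begin{align*}
\dot V \le -k_s\|\mathrm{sk}(\bar R^\top R_l)^\vee\|^2 - \{\mathrm{sk}(R_l^\top R_r)^\vee\}^\top\omega^b_\mathrm{h}
\end{align*}
or its analogue. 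Integrating it and invoking \eqref{eq:passivity_human} gives $V(\tau)+k_s\int_0^\tau\|\mathrm{sk}(\bar R^\top R_l)^\vee\|^2dt \le V(0)+\beta$.

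From this inequality I conclude $\mathrm{sk}(\bar R^\top R_l)^\vee\in\mathcal L_2$. All signals live on the compact set $SO(3)$, so the human input is bounded. By the fourth item of Assumption~\ref{assum:passivity_human}, $\omega^s_\mathrm h$ is bounded, so all derivatives are bounded and Barbalat's lemma gives $\mathrm{sk}(\bar R^\top R_l)\to 0$. Lemma~\ref{lem:1} keeps $\bar R_r+\bar R_r^\top\succeq 0$. The second item of Assumption~\ref{assum:passivity_human} keeps $\mathrm{sym}(R_{rl})\succ0$. Together these confine $\bar R^\top R_l$ away from the rotation-by-$\pi$ set, so the skew part vanishing forces $\bar R^\top R_l\to I_3$. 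Hence $\bar d - d_l \to 0$.

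The main obstacle is showing $d_l\to d_r$, and I would handle it by a LaSalle-type argument. The human is time-invariant and the state is bounded, so I take an $\omega$-limit set. On it $\bar R = R_l$, and by \eqref{eq:rigid_body_dynamics_control_world} $\tilde\omega^s_\mathrm h\equiv 0$ and $\omega_l = R_l^\top\omega^s_\mathrm h$. The invariance of $\bar R=R_l$ then forces $\dot{\bar R}=0$, hence $\dot R_l = 0$ and thus $\omega^s_\mathrm h\equiv 0$ on the limit set. The third item of Assumption~\ref{assum:passivity_human} then yields $d_l=d_r$ there. Combined with $\bar d = \bar R\mathrm e_3$, this gives \eqref{eq:control_objective}. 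Making the limit-set argument rigorous for the interconnection with an unmodeled but time-invariant human is the delicate step. I would treat the human as an autonomous system with bounded internal state and rely on the invariance of $\omega$-limit sets of the full autonomous closed loop.
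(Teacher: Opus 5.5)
There is a genuine gap, and it sits at the step you deferred (``check the cross terms carefully''). None of your candidate storage functions yields the dissipation inequality you target, and the fallback you commit to provably fails. With $Q=\bar R^\top R_l$, (\ref{eqn:hatanaka_edit}) and (\ref{eq:rigid_body_dynamics_control_world}) give $\dot Q=-k_s\{\mathrm{sk}(Q)Q+Q\,\mathrm{sk}(Q)\}+Q\hat\omega^b_{\mathrm h}$, hence
\[
\frac{d}{dt}\phi(Q)=-2k_s\|\mathrm{sk}(Q)^\vee\|^2+\{\mathrm{sk}(\bar R^\top R_l)^\vee\}^\top\omega^b_{\mathrm h}.
\]
Here the human command is paired with the leader--average error. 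Assumption (\ref{eq:passivity_human}) only controls its pairing with $\mathrm{sk}(R_l^\top R_r)^\vee=-\mathrm{sk}(R_{rl})^\vee$. The only other place $\omega^b_{\mathrm h}$ enters is $\frac{d}{dt}\phi(R_{rl})$, where it is paired with $\mathrm{sk}(R_{rl})^\vee$, and $\bar R$ is not driven by $\omega^b_{\mathrm h}$ at all. So this term can be neither cancelled nor dominated. Every candidate containing $\phi(\bar R^\top R_l)$ with nonzero weight therefore fails. That covers your fallback ($\phi(\bar R^\top R_l)$ plus the human storage), your ``simplest candidate'', and the first weighted one alike, and it leaves your $\mathcal L_2$/Barbalat step with nothing to rest on.

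The missing idea, which is what the paper uses, is to measure both the average and the leader against the reference. Take $V=\phi(\bar R_r)+\phi(R_{rl})+S_{\mathrm h}$ with $S_{\mathrm h}=\beta-\int_0^\tau\{\mathrm{sk}(R_{rl})^\vee\}^\top\omega^b_{\mathrm h}\,dt\ge 0$. Then $\frac{d}{dt}\phi(R_{rl})$ contains exactly $\{\mathrm{sk}(R_{rl})^\vee\}^\top\omega^b_{\mathrm h}$, which cancels $\dot S_{\mathrm h}$. The coupling parts of the two rotation energies each split into a trace term plus residuals $\mp k_s\phi(\bar R_r)\pm k_s\phi(R_{rl})$. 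These residuals cancel only because the two energies carry equal weight, so the $\frac{1}{k_s}$ weighting in your first candidate would break this unless $k_s=1$. What remains is
$-\frac{k_s}{4}\mathrm{tr}\{(\bar R_r+\bar R_r^\top)(I_3-\bar R^\top R_l)\}-\frac{k_s}{4}\mathrm{tr}\{(R_{rl}+R_{rl}^\top)(I_3-R_l^\top\bar R)\}$, which is not sign-definite in general. By Lemma~\ref{lem:1} and the second item of Assumption~\ref{assum:passivity_human}, it is bounded above by $-\frac{k_s}{2}\{\lambda_{\min}(\bar R_r+\bar R_r^\top)+\lambda_{\min}(R_{rl}+R_{rl}^\top)\}\phi(R_l^\top\bar R)\le 0$.

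So these two facts are needed inside the dissipation inequality itself, not merely afterwards to rule out rotations by $\pi$. The dissipated quantity is this weighted $\phi(\bar R^\top R_l)$, not $k_s\|\mathrm{sk}(\bar R^\top R_l)^\vee\|^2$. With that inequality, $\dot V=0$ forces $R_l=\bar R$ directly. From there your closing invariance argument ($\omega^s_{\mathrm h}\equiv 0$ on $\bar R=R_l$, then $d_l=d_r=\bar d$ by the third item) coincides with the paper's LaSalle step.
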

{\it Proof.}
    Let us define the energy function for a rotation matrix $R \in SO(3)$ as
    \begin{align*}
        \phi(R) := \frac{1}{2} \text{tr}(I_3-R) = \frac{1}{4} \|I_3-R\|_F^2,
    \end{align*}
    where $\|\cdot\|_F$ denotes the matrix Frobenius norm and $\phi(R)$ is non-negative. 
    Using the energy function, we define $S_{r} := \phi(\bar R_{r})$ and $S_{rl} := \phi(R_{rl})$.
    The time derivative of the function $S_{r}$ is given by
    \begin{align*}
        \dot{S}_{r}&=-\frac{1}{2}\text{tr}(\dot{\bar R}_{r})
        =-\frac{1}{2}k_s \, \text{tr}\{\bar R_{r} \, \text{sk}(\bar R_{r}^\top R_{rl})\} \\
        &=-\frac{1}{4}k_s \, \text{tr}\{\bar R_{r}(\bar R_{r}^\top R_{rl}-R_{rl}^\top \bar R_{r})\} \\
        &=-\frac{1}{4}k_s \, \text{tr}(R_{rl}-\bar R_{r}R_{rl}^\top \bar R_{r})  \\
        &=-\frac{1}{4}k_s \, \text{tr}\{(\bar R_{r}+\bar R_{r}^\top )(I_3-\bar R_{r}^\top R_{rl})\}\\
        &\hspace{2cm}+\frac{1}{2} k_s\, \text{tr}(\bar R_{r})-\frac{1}{2}k_s\, \text{tr}(R_{rl})  \\
        &=-\frac{1}{4}k_s \, \text{tr}\{(\bar R_{r}+\bar R_{r}^\top )(I_3-\bar R^\top R_{l})\}\\
        &\hspace{2cm}-\frac{1}{2}k_s\, \text{tr}(I_3-\bar R_{r})+\frac{1}{2}k_s \,\text{tr}(I_3-R_{rl})  \\
        &\leq -\frac{1}{2}k_s\,\lambda_{\text{min}}(\bar R_{r}+\bar R_{r}^\top )\phi(\bar R^\top R_{l})\\
        &\hspace{2cm}-k_s\, \phi(\bar R_{r})+k_s \, \phi(R_{rl}).
    \end{align*}
    Similarly, the time derivative of $S_{rl}$ is given by
    \begin{align*}
        \dot{S}_{rl}&=-\frac{1}{2}\text{tr}(\dot R_{rl})=-\frac{1}{2}\text{tr}\{R_{rl} (k_s \, \text{sk}(R_{rl}^\top \bar R_{r}) + \hat{\omega}_\mathrm{h}^b)\}\\
                &= -\frac{1}{2}k_s\, \text{tr}\{R_{rl}\,\text{sk}(R_{rl}^\top  \bar R_{r})\}-\frac{1}{2}\text{tr}(R_{rl}\hat{\omega}_\mathrm{h}^b)\\
                &\leq  -\frac{1}{2}k_s \, \lambda_{\text{min}}(R_{rl}+R_{rl}^\top )\phi(R_{l}^\top \bar R)+k_s \, \phi(\bar R_{r})\\
                &\hspace{2cm}-k_s \, \phi(R_{rl})+\{\text{sk}(R_{rl})^{\vee}\}^\top \omega_\mathrm{h}^b.
    \end{align*}
    Now, (\ref{eq:passivity_human}) is rewritten as 
    \begin{align*}
        S_{\mathrm{h}}=
        -\int_{0}^{\tau} \{\text{sk}(R_{rl}(t))^{\vee}\}^{\top}
                \omega_{\mathrm{h}}^b(t) \, dt+\beta\geq0,
    \end{align*}
    where $S_{\mathrm{h}}$ is the energy function of the human operator.
    Defining the total energy function of the system as $V = S_{r} + S_{rl} + S_{\mathrm{h}} \ge 0$, we have
    \begin{align}
\dot{V}&=\dot{S}_{r}+\dot{S}_{rl}+\dot{S}_\mathrm{h}\notag \\
        &\leq -\frac{1}{2}k_s \, \{\lambda_{\text{min}}(\bar R_{r}+\bar R_{r}^\top )+\lambda_{\text{min}}(R_{rl}+R_{rl}^\top )\}\phi(R_{l}^\top \bar R).
        \label{eq:dotV}
    \end{align}
    Under the second item of Assumption \ref{assum:passivity_human}, 
    $\lambda_{\text{min}}(R_{rl}+R_{rl}^\top )>0$
    and $\lambda_{\text{min}}(\bar R_{r}+\bar R_{r}^\top )\geq 0$ holds from Lemma \ref{lem:1}, which
    means $\dot{V} \le 0$.
    We see from (\ref{eq:dotV}) that $\dot{V} = 0$ is fulfilled if and only if $R_l=\bar R$ holds.

    Let us consider the set of states identically satisfying $\dot{V} \equiv 0$, which implies $\textrm{sk}(R_l^\top \bar R)^\vee =0$.
    In this set, the time derivative of $R_l^\top \bar R$ is given by
    \begin{align*}
        \frac{d}{dt}(R_l^\top \bar R)&=\dot{R}_l^\top \bar R + R_l^\top \dot{\bar R}
        =-R_l^\top\hat{\omega}_\mathrm{h}^sR_l
        =O_3,
    \end{align*}
    which implies that $\omega^s_\mathrm{h} = 0$ must identically holds in the set.
    Utilizing the third item of Assumption \ref{assum:passivity_human}, 
    $d_r = d_l = \bar d$ must identically hold in order that the state keeps meeting $\dot V \leq 0$.
    Then, by invoking the LaSalle's principle, we can show that
    the system trajectories achieve (\ref{eq:control_objective}).
\hfill $\square$

%% file: text/simulation_design.tex
\section{Passivity of the Human Decision Process}
\label{chp:simulation_design}
\begin{figure}[t]
  \centering
  \includegraphics[width=\linewidth]{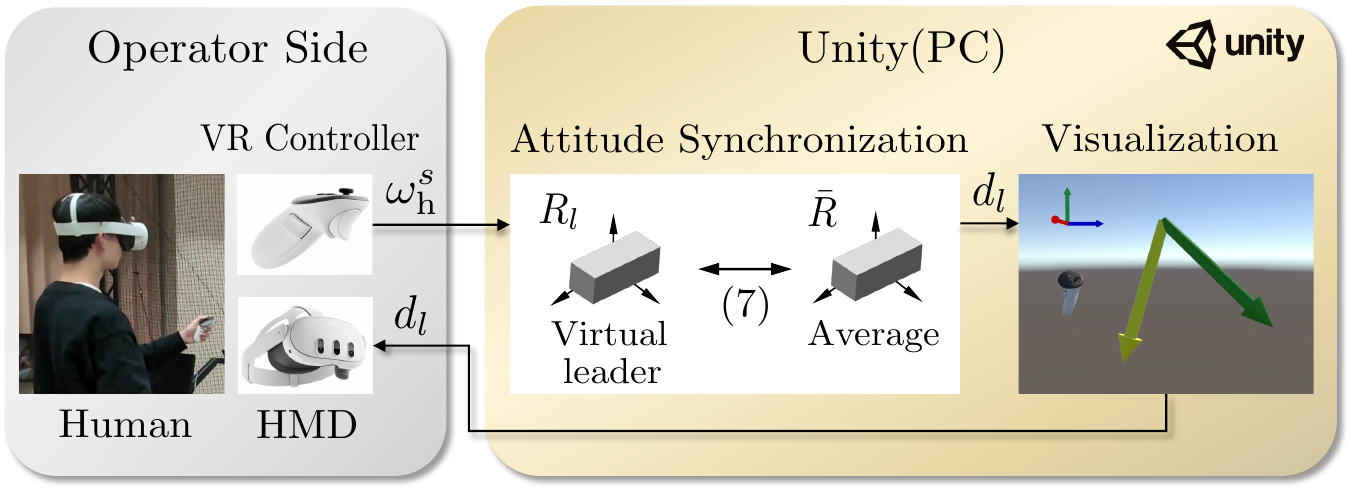}
  \caption{Schematic of the human-in-the-loop simulation system.
      The left block shows the human operator, equipped with the VR controller and head-mounted display (HMD).
      The right block shows the virtual 3D scene built on Unity, 
      where the rigid-body motions are updated in real-time.}
      \label{fig:simulation_system}
\end{figure}
In this section, we examine human passivity assumed in Assumption \ref{assum:passivity_human}.
Specifically, we build human operator models using operation data obtained from the human-in-the-loop simulation. We then analyze passivity of the resulting human models.

\subsection{Simulation Setup}
Fig.~\ref{fig:simulation_system} illustrates a schematic of the present human-in-the-loop simulator, which
integrates a virtual reality (VR) interface, a human operator, and the present controller.
The attitude dynamics 
are simulated on Unity.
A Meta Quest 3S head-mounted display (Meta Platforms, Inc.) is used to provide visual feedback to the human operator, while Touch Plus controller (Meta Platforms, Inc.) serves as the command  interface. 
The simulator executes a loop at 120 Hz, updating both the human command and the rigid-body motions in Unity. 
The initial attitudes of the bodies are set to $\bar R=R_l=I_3$.

The operator commands the spatial angular velocity
$\omega_{\mathrm{h}}^{s}$ by manipulating the
attitude of the controller. The attitude of the controller at the moment of pressing
the trigger button is recorded as a temporal initial orientation $R_{0}\in SO(3)$. 
The spatial angular velocity $\omega_{\mathrm{h}}^{s}$ is generated based
on the attitude of the VR controller $R_{t}\in SO(3)$ as follows:
\begin{align*}
    \omega_{\mathrm{h}}^{s} = k_{\omega}\,\text{log}(R_{t}R_{0}^{\top})^{\vee}, 
\end{align*}
where $k_{\omega} > 0$ is a control gain, which was set to $k_{\omega} =0.8$ in this study.
The norm $\|\omega_{\mathrm{h}}^{s}\|$ was saturated to $1.0$ rad/s.
\begin{figure}
  \centering
  \includegraphics[width=0.65\linewidth]{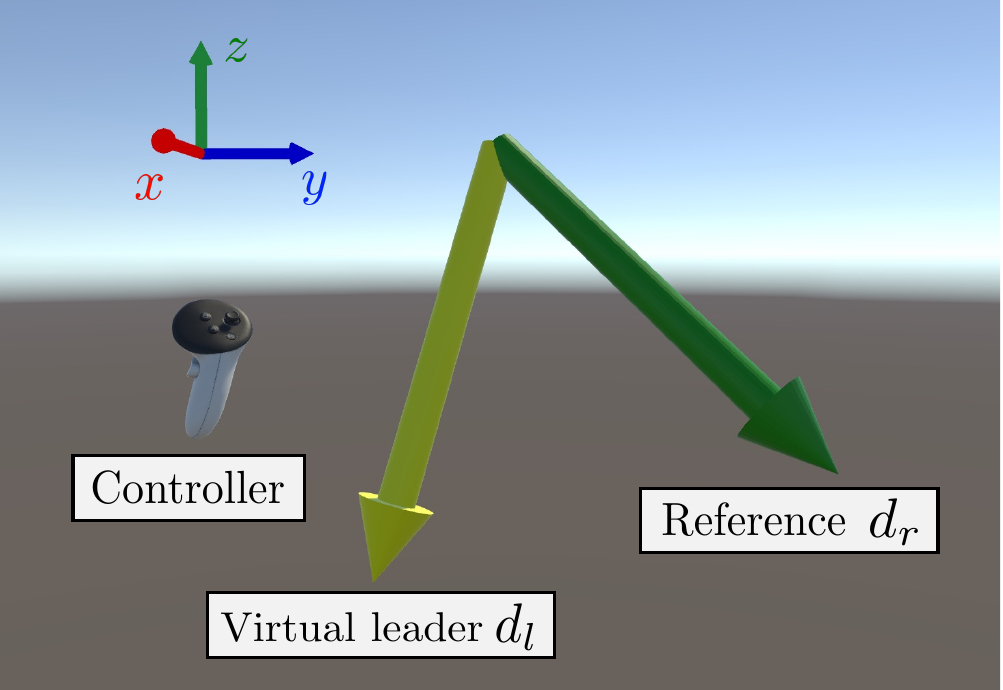}
  \caption{Simulation environment in a virtual space.}
  \label{fig:vr_system}
\end{figure}

In the present simulation, 
the operator tries to regulate the $z$-axis of the frame, 
namely $d_l$, to the externally provided reference $d_r$.
Accordingly, we had the operator visually feedback $d_l$ as shown in Fig.~\ref{fig:vr_system}. 
We initially built a model with three dimensional output $\omega_{\mathrm{h}}^b$. 
As a result, the gain of the third element of $\omega_{\mathrm{h}}^b$ corresponding to the rotation around the $z$-axis
was remarkably smaller than the other two, although we cannot show the results due to the page constraint. 
We thus artificially reset $(\omega_{\mathrm{h}}^b)_3$ to zero in order to simplify the modeling and analysis. 
Note that this constraint does not affect the passivity-based stability analysis presented in Section \ref{sec:passivity-based_control}, 
as the general properties of the attitude dynamics on $SO(3)$ remain preserved.

We had a trial subject conduct the human-in-the-loop simulation after
sufficient trainings. 
The reference $d_r$ was changed randomly at every 15 s. This alignment task was repeated 12 times totaling
180 s; the first 6 trials used for identification and the latter 6 trials for validation.
Since the reference $d_{r}$
is externally given to the operator, the initial response period includes cognitive dead time. 
This phenomenon is specific to the present identification experiments and is unlikely to occur in real operations in which the operator generates the reference internally.
Therefore, we eliminated the data 
from the moment the reference value changed until the operator pressed the start button in the same way as (\cite{Hatanaka:2024}).

\begin{figure}[tb]
  \centering
  \includegraphics[width=1\linewidth]{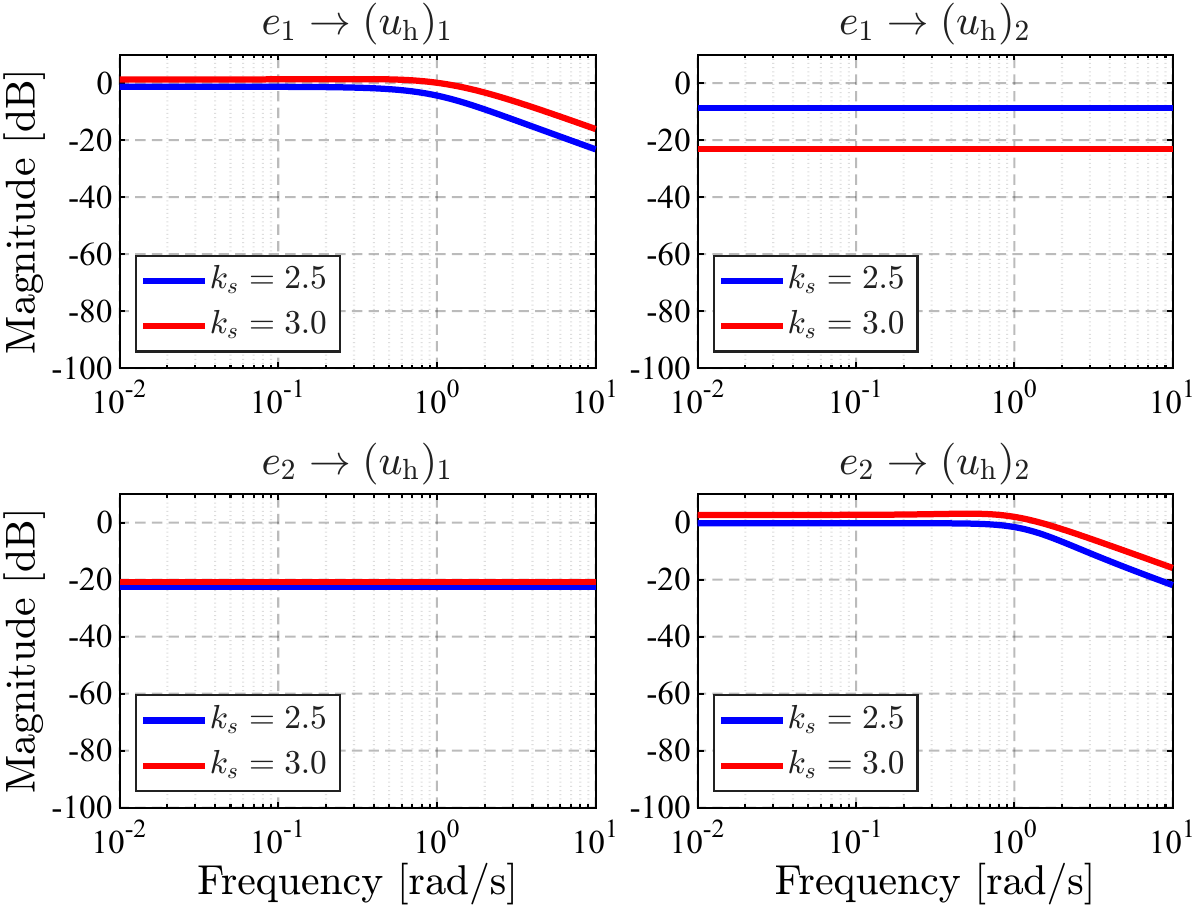}
  \caption{Bode diagrams of the identified human operator models with 2-by-2
  transfer function matrix.}
  \label{fig:bode_diagram_2x2}
\end{figure}

\begin{figure}[tb]
  \centering
  \includegraphics[width=1\linewidth]{
    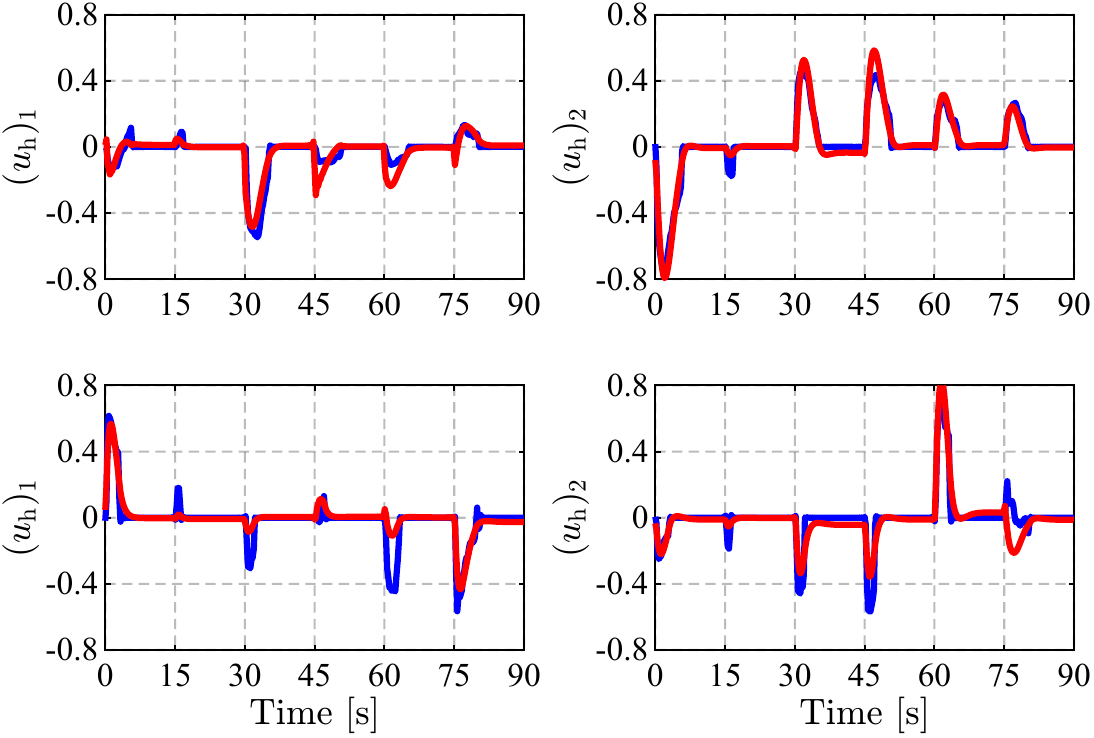
  }
  \caption{Time series data of the verification data (blue) and the model
  outputs (red) on $u_{\mathrm{h}}$ (top: $k_s=2.5$, bottom: $k_s=3.0$).}
  \label{fig:time_series_model_outputs}
\end{figure}

\subsection{Human Modeling}


Let us first identify the continuous-time transfer function that represents the human
behavior. 
The output of the model is set to the first two elements of the body velocity command as
$u_{\mathrm{h}}:=[(\omega_{\mathrm{h}}^b)_{1}\ (\omega_{\mathrm{h}}^b)_{2}]^{\top}$ as the third element is artificially set to zero.
The input to the model is defined as 
 $e:=[\{\text{sk}(R_{l}^{\top} R_{r})^{\vee}\}_{1}\ \{
\text{sk}(R_{l}^{\top} R_{r})^{\vee}\}_{2}]^{\top}$
while excluding the third element of the error, since the operator does not care about the regulation corresponding to this element.
Actually, we built a model with three inputs and two outputs, and confirmed that the contributions of the third element of the error to $u_\mathrm{h}$ are limited. 

The identification is performed utilizing 
MATLAB System Identification Toolbox (MathWorks Inc.).
Prior to identification, downsampling was performed to decimate the sampled data
to $10$ rad/s.
As a result, we obtained a model $u_{\mathrm{h}}(s)=H(s)e(s)$
with a 2-by-2 transfer function matrix $H(s)$.
In the system identification, we used a model with 2 poles and 1 zero for diagonal elements 
and constant non-diagonal elements, which is determined through the same process as (\cite{Hatanaka:2024}).
The bode diagrams of the model with two gains $k_s = 2.5$ and $k_s = 3.0$ are illustrated in Fig.~\ref{fig:bode_diagram_2x2}.
We see from the figure that the diagonal elements are dominant in determining $u_{\mathrm{h}}$, 
which is reasonable from physical correspondence of the input-output variables.



Fig.~\ref{fig:time_series_model_outputs} shows the time response of the model outputs
and validation data.
For $k_s=2.5$, the fit ratios of the model output were 69.40\% for the identification and 
66.15\% for the validation data.
In contrast, for $k_s=3.0$, the ratios were 77.79\% and 51.66\%, respectively.
Although the fit ratio is moderate, the identified model sufficiently captures the dominant input–output behavior, as seen in Fig.~\ref{fig:time_series_model_outputs}, relevant to passivity and stability analysis.

\subsection{Passivity Analysis}
Let us analyze passivity of the identified human models. To verify the
passivity condition (\ref{eq:passivity_human}) presented in Assumption \ref{assum:passivity_human},
we define the passivity index
\begin{align*}
  \nu(\omega) = \frac{1}{2}\lambda_{\min}(H(j\omega)+H^{\mathsf{H}}(j\omega)), \label{eq:passivity_index}
\end{align*}
where the notation ${\mathsf{H}}$ denotes the conjugate transpose and
$\omega$ is the angular frequency. This passivity index facilitates a frequency-wise
passivity analysis. Specifically, it is well known that a linear time-invariant system $H$ is passive if and only if
$\nu(\omega) \geq 0$ holds $\forall\omega$.

The passivity indices $\nu(\omega)$ for the identified human models are illustrated in Fig.~\ref{fig:passivity_index_2x2}.
We see from this figure that the condition $\nu(\omega) \geq 0$ holds for $k_s=3.0$, while it is not satisfied for $k_s=2.5$ at frequencies above approximately 2 rad/s.
This implies that passivity may not be satisfied depending on the parameter settings even for the same operator.
The results suggest that the parameter $k_s$ should be carefully tuned to ensure the human passivity. In addition to the thorough analysis on the relation between $k_s$ and human passivity, extending the present framework to the one accepting possibly passivity-short human operators as done in (\cite{Hatanaka:2024}) is also left as future work.


%% file: text/conclusion.tex
\section{Conclusion}
\label{chp:conclusion}
This paper proposed a novel semi-autonomous rotational motion navigation framework designed based on passivity. 
Closed-loop stability was rigorously established under the assumption of human passivity.
In addition, human-in-the-loop simulations were conducted to identify models of the human operator. Passivity analysis of the identified models revealed that human passivity is sensitive to parameter selection, highlighting the necessity of extending the present framework toward control design that explicitly accounts for passivity-short human operators.



\begin{figure}[tb]
    \centering
    \includegraphics[width=1\linewidth]{
      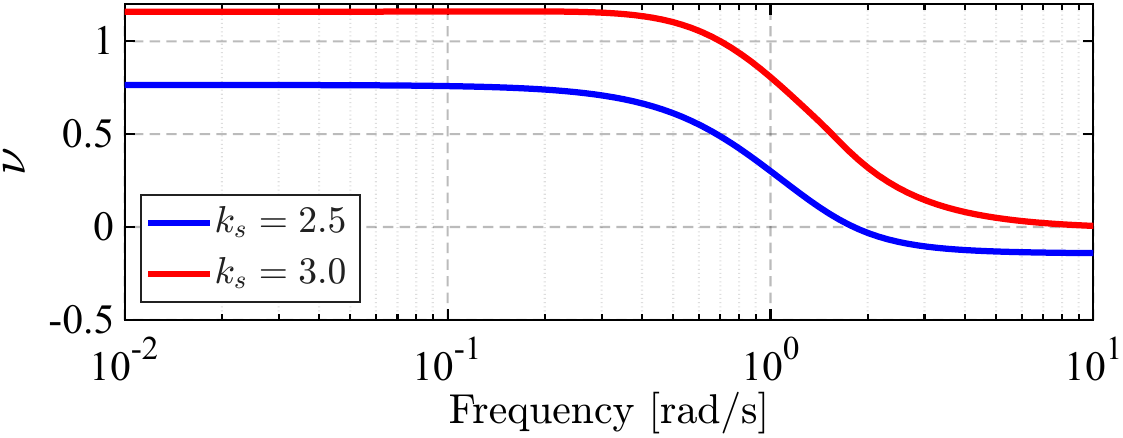
    }
    \caption{Passivity indices of the identified human operator
    model with 2-by-2 transfer function matrix.}
    \label{fig:passivity_index_2x2}
  \end{figure}